\def\tr{{\text{tr}}}
\def\BibTeX{{\rm B\kern-.05em{\sc i\kern-.025em b}\kern-.08em
    T\kern-.1667em\lower.7ex\hbox{E}\kern-.125emX}}
\begin{document}

\title{Sampling Policy Design for Tracking Time-varying Graph Signals with Adaptive Budget Allocation}

\author{%
\authorblockN{%
Xuan Xie, Hui Feng, Bo Hu
}
\authorblockA{%
Research Center of Smart Networks and Systems, School of Information Science and Technology\\
Fudan University, Shanghai, China \\
E-mail: \{xxie15, hfeng, bohu\}@fudan.edu.cn }
%
%
}
\newcommand{\tabincell}[2]{\begin{tabular}{@{}#1@{}}#2\end{tabular}}
\newtheorem{theorem}{Theorem}
\newtheorem{lemma}{Lemma}
\newtheorem{definition}{Definition}
\maketitle
\thispagestyle{empty}

\begin{abstract}
There have been many works that focus on the sampling policy design for static graph signals (GS), but few for time-varying GS.
In this paper, we concentrate on how to select vertices to sample and how to allocate the sampling budget for a time-varying GS to reduce tracking error.
In the Kalman Filter (KF) framework, the problem of sampling policy design and budget allocation is formulated as an infinite horizon sequential decision process, in which the optimal sampling policy is obtained by Dynamic Programming (DP). 
Since the optimal policy is intractable, an approximate algorithm is proposed by truncating the infinite horizon to two stages. 
By introducing a new tool for analyzing the convexity or concavity of composite functions, we prove that the truncated problem is convex so that it can be solved by standard tools.
Finally, we demonstrate the performance of the proposed approach through numerical experiments.
\end{abstract}
\begin{IEEEkeywords}
Time-Varying Graph Signals, Sampling Policy Design, Kalman Filter, Dynamic Programming
\end{IEEEkeywords}
\section{Introduction}
Time-varying graph signals (GS) are versatile for describing dynamics of signals in irregular domains, such as social, sensor and brain networks. 
There have been various works on stationariness, filtering and sampling of time-varying GS.
For example, stationary processes of GS and some corresponding applications are investigated in \cite{girault2015stationary,perraudin2017stationary}, frequency analysis and time-graph filter are proposed in \cite{loukas2016frequency,isufi2016separable,bohannon2019filtering}, time-varying GS reconstruction and sampling are introduced in \cite{mao2019time,wei2019optimal}.

For some large-scale networks, sampling theory for GS is essential since it is almost impractical to acquire the signals on all the nodes. 
Instead, the whole GS has to be estimated from the samples on a subset of nodes.
For example, the opinions of all the users in a network is usually estimated by sending questionnaires to part of the users, since it is unaffordable to obtain the opinion of everyone in a huge social network due to the limited time and manpower.  

When estimating GS from noisy observations, different sampling policies will result in different estimation performance. 
The design of sampling set aims at sampling vertices under the budget constraints to minimize the estimation error.
There are many works that focus on the sampling set design for static GS \cite{anis2016efficient,chen2015discrete,xie2017design}. 
Most existing sampling policy for time-varying GS minimize the tracking error myopically.  
In \cite{di2018adaptive}, sampling policies are designed for tracking bandlimited time-varying GS under least mean squares (LMS) and recursive least squares (RLS) framework.
Sampling strategy for tracking bandlimited GS by Kalman Filter (KF) is proposed in \cite{isufi2020observing}. 
In a related but not identical scenario, sensor selection is designed for target tracking in the network by KF \cite{shen2014sensor} and extended KF \cite{chepuri2014sparsity}.
However, \cite{di2018adaptive,isufi2020observing,shen2014sensor,chepuri2014sparsity} do not consider the effect of the present sampling policy to the future tracking error, which may be not optimal for the long-term tracking performance. 

The evolution of time-varying GS can depict both the change of signal like heat diffusion and the change of topology.
For example, the diffusion of GS on a random edge sampling (RES) graph \cite{7931690} which describes the case like link failures on the communication network,or street closures on the street network.
Considering that the evolution may be slow or abrupt, \emph{a good sampling policy design should not only focus on the instant tracking performance but also consider the long-term performance}.
Meanwhile, a reasonable allocation of sampling budget among time steps will be also beneficial for tracking.

In KF framework, we consider the problem of sampling policy design for tracking a time-varying GS over an infinite horizon with a given average budget. 
Different from \cite{di2018adaptive,isufi2020observing}, whose sampling sets are designed to minimize the instant tracking error, we consider the influence of current sampling policy to the future tracking performance and aim to minimize the tracking error for the long-term. 
Instead of given a fixed sampling budget for each time step, we also try to adaptively allocate the budget to minimize the tracking error.
The problem of sampling set design and budget allocation is formulated as an infinite horizon sequential decision process and solved by dynamic programming (DP). 
Furthermore, an approximate optimization problem is proposed to get a suboptimal solution since the optimal solution is computationally prohibitive. 
We also prove that the approximate optimization problem is convex by introducing a new tool to analyze the convexity of composite matrix valued functions so that it can be solved by standard tools.
Finally, several experiments validate that our approach has significantly improved the tracking performance than the state-of-art methods especially when the evolution of the GS is abrupt.

\section{Time-Varying Graph Signals}
Consider an $N$-vertex undirected connected graph $\mathcal{G}=(\mathcal{V},\mathcal{E},\mathbf{W})$, where $\mathcal{V}$ is the vertex set, $\mathcal{E}$ is the edge set, and $\mathbf{W}$ is the weighted adjacency matrix. If there is an edge $e=(i,j)$ between vertex $i$ and $j$, then $W_{i,j}$ represents the weight of the edge; otherwise $W_{i,j}=0$. A time-varying graph signal $\mathbf{f}_t \in \mathbb{R}^N$ at the moment $t$ has the element $(\mathbf{f}_t)_i$ representing the signal value on the $i$-th vertex in $\mathcal{V}$.

The graph Laplacian is defined as $\mathbf{L}=\mathbf{D}-\mathbf{W}$, where the weighted degree matrix $\mathbf{D}=\text{diag}(\mathbf{1}^\text{T}\mathbf{W})$ and $\mathbf{1}$ is a vector with all ones. Since the Laplacian matrix is real symmetric, it has a complete eigenbasis and the spectral decomposition 
$\mathbf{L}=\mathbf{V}\mathbf{\Lambda}\mathbf{V}^\text{T}$,
where the eigenvectors  $\{\mathbf{u}_k\}_{0\leq k\leq N-1}$ of $\mathbf{L}$ form the columns of $\mathbf{V}$, and $\mathbf{\Lambda}
\in \mathbb{C}^{N\times N}$ is a diagonal matrix of eigenvalues $0=\lambda_0\leq \lambda_1\leq \cdots\leq \lambda_{N-1}$ of $\mathbf{L}$. The Graph Fourier Transform (GFT) corresponds to the basis expansion of a signal. The eigenvectors of $\mathbf{L}$ are regarded as the graph Fourier bases and the eigenvalues are regarded as frequencies \cite{sandryhaila2014discrete}. The expansion coefficients of a static graph signal $\mathbf{f}$ in terms of eigenvectors are defined as $\hat{\mathbf{f}}$, so that a GFT pair can be expressed as $\mathbf{f}=\mathbf{V}\hat{\mathbf{f}}$ and $\hat{\mathbf{f}}=\mathbf{V}^\text{T}\mathbf{f}$.

In this paper, we assume that the GS is a stochastic signal and the stochastic prior is usually given in frequency domain \cite{8047995,perraudin2017stationary}, such that $\hat{\mathbf{f}}$ is drawn from the following distribution
\begin{equation}
\label{stochastic}
p(\hat{\mathbf{f}}) \propto \text{exp}(-(\hat{\mathbf{f}} - \bm{\mu})^\text{T}\mathbf{\Sigma}^{-1}_{\hat{\mathbf{f}}}(\hat{\mathbf{f}} - \bm{\mu})),
\end{equation}
where $p(\cdot)$ denotes probability density function, $\bm{\mu}$ and $\mathbf{\Sigma}_{\hat{\mathbf{f}}}= \text{diag}(\sigma^2_1,\cdots,\sigma^2_N)$ are the mean and covariance matrix of $\hat{\mathbf{f}}$ respectively.

We assume that the time-varying GS follow a predefined evolution matrix $\mathbf{H}_t$, which can be used to depict the network dynamics in both GS and topology, for example, disease progression \cite{raj2012network}, opinion propagation \cite{wu2014opinionflow} and topology with random edge connections \cite{7931690}.
The evolution noise is introduced to fit the uncertainty. Specifically, we have
\begin{subequations}
\label{linear model}
\begin{align}
\label{evolution}
	\mathbf{f_t}  = &\mathbf{H}_t\mathbf{f_{t-1}} + \mathbf{v}_{t-1},\quad\mathbf{v_t}\sim\mathcal{N}(\mathbf{0},\mathbf{\Sigma}_{\mathbf{v}}),\\
\label{observation}
	\mathbf{y}_t  = &\mathbf{\Psi}_t ({\mathbf{f}_t}+\mathbf{w}_t),\quad\mathbf{w_t}\sim\mathcal{N}(\mathbf{0},\mathbf{\Sigma}_{\mathbf{w}}),
\end{align}
\end{subequations}
where $\mathbf{w_t}$ and $\mathbf{v_t}$ are uncorrelated $\textit{i.i.d.}$ Gaussian noise, $\mathbf{\Sigma}_{\mathbf{v}} = \sigma^2_{\mathbf{v}}\mathbf{I}$ and $\mathbf{\Sigma}_{\mathbf{w}} = \sigma^2_{\mathbf{w}}\mathbf{I}$.
Eq. (\ref{evolution}) represents the GS evolution model and (\ref{observation}) is the observation model.
The sampling operator $\mathbf{\Psi}_t:\mathbb{R}^N\mapsto \mathbb{R}^{M}$ is defined as 
\begin{eqnarray}
\label{sampling operator}
(\mathbf{\Psi}_t)_{i,j}=
        \left\{\begin{matrix}
        1, & \text{if } j\text{-th node is sampled at time }t;\\
        0, & \text{otherwise.}
        \end{matrix}\right.
\end{eqnarray}




By applying GFT, we change the GS to spectral domain and rewrite (\ref{linear model}) as
\begin{subequations}
\label{linear model fre}
\begin{align}
\label{evolution ref}
\hat{\mathbf{f}}_t =& \mathbf{V}^\text{T}\mathbf{H}_t\mathbf{V}\hat{\mathbf{f}}_{t-1} + \mathbf{V}^\text{T}\mathbf{v}_t = \tilde{\mathbf{H}}_t\hat{\mathbf{f}}_{t-1} + \mathbf{V}^\text{T}\mathbf{v}_t,\\
\label{observation ref}
\mathbf{y}_t = &\mathbf{\Psi}_t (\mathbf{V}\hat{\mathbf{f}}_t+\mathbf{w}_t),
\end{align}
where $\tilde{\mathbf{H}}_t = \mathbf{V}^\text{T}\mathbf{H}_t\mathbf{V}$.

\end{subequations}


The KF can be applied for tracking the GS described by (\ref{linear model fre}), which consists of the following equations for each time step $t = 1,2,\cdots$:
\begin{eqnarray}
\label{prior estimation}
\hat{\mathbf{f}}_{t}^- &=& \tilde{\mathbf{H}}_t\hat{\mathbf{f}}_{t-1}^+,\\
\label{a prior cov}
\mathbf{P}_t^- &=& \tilde{\mathbf{H}}_t\mathbf{P}_{t-1}^+\tilde{\mathbf{H}}_t + \mathbf{\Sigma}_{\mathbf{v}},\\
\label{KF gain}
\mathbf{K}_t  &=& \mathbf{P}_t^-\mathbf{V}^\text{T}{\mathbf{\Psi}}_t^\text{T}\left({\mathbf{\Psi}}_t(\mathbf{V}\mathbf{P}_t^-\mathbf{V}^\text{T} + \mathbf{\Sigma}_{\mathbf{w}}){\mathbf{\Psi}}_t^\text{T} \right)^{-1},\\
\label{posterior estimation}
\hat{\mathbf{f}}_{t}^+ &=& \hat{\mathbf{f}}_{t}^- + \mathbf{K}_t(\mathbf{y}_t - \mathbf{\Psi}_t\mathbf{V} \hat{\mathbf{f}}_{t}^-),\\
\label{a posteriori cov}
\mathbf{P}_t^+ &=& \left((\mathbf{P}_t^-)^{-1} + \mathbf{V}^\text{T}\mathbf{\Psi}_t^\text{T}\mathbf{\Psi}_t\mathbf{\Sigma}_{\mathbf{w}}^{-1}{\mathbf{\Psi}}_t^\text{T}\mathbf{\Psi}_t\mathbf{V}\right)^{-1},
\end{eqnarray}
where $\hat{\mathbf{f}}_t^-$, $\mathbf{P}_t^-$, $\mathbf{K}_t $, $\hat{\mathbf{f}}_{t}^+$, $\mathbf{P}_t^+$ denote the prior estimation, prior covariance, KF gain, posterior estimation and posterior covariance respectively. 
The initialization states are $\hat{\mathbf{f}}_{0}^+ = \bm{\mu}$ and $\mathbf{P}_0^+=\mathbf{\Sigma}_{\hat{\mathbf{f}}}$ based on (\ref{stochastic}). 

\section{Sampling Policy Design}
\label{sce:sampling}
\subsection{Sampling as an Infinite Horizon Decision Process}
\label{subsec:DP}
In the KF framework, the mean squared error (MSE) of GS estimation in time step $t$ can be calculated by $\tr\left(\mathbf{P}_t^+\right)$.
In order to get an optimal tracking performance for an infinite horizon, we design a sequence of sampling operators $\{\mathbf{\Psi}_1,\mathbf{\Psi}_2,\cdots\}$ that minimize the accumulated tracking error $\sum_{t=1}^{\infty}\gamma^t\tr\left(\mathbf{P}_t^+\right)$ under the sampling budget constraints, where $\gamma\in(0,1)$ is the discount factor.
The effect of $\mathbf{\Psi}_t$ to the accumulated tracking error is cascading according to (\ref{a posteriori cov}) and (\ref{a prior cov}), so the sampling operator design at present time must balance the present tracking error and the future tracking error. 

Denote $(\mathbf{P}_t^-)^{-1}$ and $\mathbf{D}_t$ as the state and action of the system at time step $t$ respectively, where $\mathbf{D}_t \triangleq \mathbf{\Psi}_t^\text{T}\mathbf{\Psi}_t = \text{diag}(d_{1,t},\cdots,d_{N,t})$ is a diagonal matrix with the $i$-th diagonal element equalling 1 if the $i$-th vertex is sampled, and 0 elsewhere. 
The decision process can be formulated as a 5-tuple $(\mathcal{S},\mathcal{A},f_t,g_t,\gamma)$, where $\mathcal{S}$ is the state set of symmetric positive definite matrices $(\mathbf{P}_t^-)^{-1}$,
$\mathcal{A}$ is the action set of $N\times N$ matrices $\mathbf{D}_t$,
$f_t$ is the law of the state transition with the form of
$(\mathbf{P}_{t+1}^-)^{-1} = f_t((\mathbf{P}_t^-)^{-1},\mathbf{D}_t))$.
According to (\ref{a posteriori cov}) and (\ref{a prior cov}) in KF, the state transition guided by $\mathbf{D}_t$ is
\begin{align*}
(\mathbf{P}_{t+1}^-)^{-1} = &\left[\tilde{\mathbf{H}}_{t+1}\left((\mathbf{P}_t^-)^{-1} +  \sigma_{\mathbf{w}}^{-2}\mathbf{V}^\text{T}\mathbf{D}_t\mathbf{V}\right)^{-1}\tilde{\mathbf{H}}_{t+1} + \mathbf{\Sigma}_{\mathbf{v}}\right]^{-1}.
\end{align*}
$g_t$ is the immediate cost of action defined as the estimation error of instant estimation with the form of
$g_t((\mathbf{P}_t^-)^{-1},\mathbf{D}_t))$, which is affected by the present sampling policy $\mathbf{D}_t$,
\begin{eqnarray}
\label{cost}
g_t((\mathbf{P}_t^-)^{-1},\mathbf{D}_t) = \tr\left(\mathbf{P}_t^+\right)= \tr\left((\mathbf{P}_t^-)^{-1} + \sigma_{\mathbf{w}}^{-2}\mathbf{V}^\text{T}\mathbf{D}_t\mathbf{V}\right)^{-1}.\nonumber
\end{eqnarray}


Then the optimal problem for sampling policy design can be formulated as choosing a sequence of actions in order to minimize the total cost over an infinite horizon,
\begin{eqnarray}
\label{totalcost}
& \underset{\mathbf{D}_1,\mathbf{D}_2,\cdots}{\text{min}}\quad & J = \sum_{t=1}^{\infty}\gamma^t g_t((\mathbf{P}_t^-)^{-1},\mathbf{D}_t)\\
& \quad \text{s.t.} & 0\leq \text{tr}(\mathbf{D}_t) \leq M_t \quad t=1,2,\cdots,\nonumber\\
&\qquad & \lim_{T\to \infty}\frac{1}{T}\sum_{t=1}^\text{T}\tr(\mathbf{D}_t) = M,\nonumber\\
&\qquad & \mathbf{D}_t\subset \mathcal{A}\quad t=1,2,\cdots. \nonumber
\end{eqnarray}
The Bellman equation \cite{bertsekas2005dynamic} is used to compute the optimal action for the decision process sequentially,
\begin{eqnarray}
\label{bellman}
J_t((\mathbf{P}_{t}^-)^{-1}) &=&  \underset{\mathbf{D}_{t}}{\text{min}}\left\{g_t((\mathbf{P}_t^-)^{-1},\mathbf{D}_t)\right.\nonumber\\
&&\left. + \gamma J_{t+1}(f_{t+1}((\mathbf{P}_t^-)^{-1},\mathbf{D}_t))\right\},
\end{eqnarray}
which means the optimal policy at $t$ is the one that minimizes the sum of immediate cost and future costs.
\subsection{The Truncated Problem}
However, finding an optimal solution for (\ref{bellman}) is computational intractable. One reason is that the dimension of action space grows exponentially with the tracking time $t$. According to (\ref{totalcost}), the weight of immediate cost in the total cost decreases over time, which means the cost in the near future has a bigger impact on the total cost. So we truncate the infinite horizon future cost in (\ref{bellman}) to the length of one \cite{Farahmand2016TruncatedAD,4289630}, which means the policy of each time step only minimizes the sum of immediate cost and the cost of the next time. For each time step $t$, the future cost is truncated to the optimal cost of $t+1$ as
\begin{align*}
& J_{t+1}((\mathbf{P}_{t+1}^-)^{-1}) = \underset{\mathbf{D}_{t+1}}{\text{min}} g_{t+1}((\mathbf{P}_{t+1}^-)^{-1},\mathbf{D}_{t+1})\nonumber\\
& =\underset{\mathbf{D}_{t+1}}{\text{min}}\, \tr\left((\mathbf{P}_{t+1}^-)^{-1} + \sigma_{\mathbf{w}}^{-2}\mathbf{V}^\text{T}\mathbf{D}_{t+1}\mathbf{V}\right)^{-1}.
\end{align*}
Thus, we obtain a new one-step-look-ahead object function,
\begin{align}
\label{subopt}
& J_t((\mathbf{P}_t^-)^{-1})  = \underset{\mathbf{D}_t}{\text{min}}\,\left\{\tr\left((\mathbf{P}_t^-)^{-1} +\sigma_{\mathbf{w}}^{-2}\mathbf{V}^\text{T}\mathbf{D}_t\mathbf{V}\right)^{-1}\right.\nonumber\\
 & \left.+ \gamma \underset{\mathbf{D}_{t+1}}{\text{min}}\left\{\tr\left(f_t((\mathbf{P}_t^-)^{-1},\mathbf{D}_t)) + \sigma_{\mathbf{w}}^{-2}\mathbf{V}^\text{T}\mathbf{D}_{t+1}\mathbf{V}\right)^{-1}\right\}\right\}.
\end{align}
Also truncating allocation of the sampling budget to two time steps, the new optimization problem is as follow 
\begin{align}
\label{horizontwo}
 \underset{\tilde{\mathbf{D}}_{t},\tilde{\mathbf{D}}_{t+1}}{\text{min}}\, &\tr\left((\mathbf{P}_t^-)^{-1} +\sigma_{\mathbf{w}}^{-2}\mathbf{V}^\text{T}\tilde{\mathbf{D}}_t\mathbf{V}\right)^{-1}+\nonumber\\
\quad & \gamma \tr\left(f_t((\mathbf{P}_t^-)^{-1},\tilde{\mathbf{D}}_t)) + \sigma_{\mathbf{w}}^{-2}\mathbf{V}^\text{T}\tilde{\mathbf{D}}_{t+1}\mathbf{V}\right)^{-1} \\
 \quad \text{s.t.}\quad & 0\leq \text{tr}(\tilde{\mathbf{D}}_t) \leq M_t,\nonumber\\
\qquad & 0\leq \text{tr}(\tilde{\mathbf{D}}_{t+1}) \leq M_{t+1},\nonumber\\
\qquad & \tr(\tilde{\mathbf{D}}_t)+\tr(\tilde{\mathbf{D}}_{t+1}) = 2M,\nonumber\\
\qquad & \tilde{\mathbf{D}}_t,\tilde{\mathbf{D}}_{t+1}\subset \tilde{\mathcal{A}}, \nonumber
\end{align}
where $\tilde{\mathcal{A}}$ is a set of $N\times N$ matrices $\tilde{\mathbf{D}}_t$ with element $0\leq d\leq 1$ in the diagonal line and 0 everywhere else.
Compared with problem (\ref{totalcost}), we relax $d_{i,t}$ and $d_{i,t+1}$ to continuous values in $[0,1]$ since the optimization problem is an intractable combinatorial problem before relaxing.
%
The design of sampling policy for KF is described in Algorithm \ref{algorithm:DP}.
\begin{algorithm}[htb] 
\caption{Sampling policy design and GS tracking.} 
\label{algorithm:DP} 
\begin{algorithmic}[1] 
\STATE \textbf{Initialize} $\hat{\mathbf{f}}_0^+$ and $\mathbf{P}_0^+$. 
\FOR{$t=1,3,5,\cdots$}
\STATE Update $\hat{\mathbf{f}}_t^-$, $\mathbf{P}_t^-$ by (\ref{prior estimation}) and (\ref{a prior cov});
\STATE Solve optimization problem (\ref{horizontwo}) to get $\tilde{\mathbf{D}}_t$ and $\tilde{\mathbf{D}}_{t+1}$;
\STATE Calculate the sampling budget of time step $t$ by $M^*_t=\text{round}(\tr(\tilde{\mathbf{D}}_t))$ and the sampling budget of time step $t+1$ by $M^*_{t+1} = 2M-M^*_t$;
\STATE Sampling the $M^*_t$ and $M^*_{t+1}$ vertices with largest $d_{i,t}$ and $d_{i,t+1}$ in time step $t$ and $t+1$, respectively;
\STATE Update $\hat{\mathbf{f}}_t^+$ and $\mathbf{P}_t^+$ by (\ref{posterior estimation}) and (\ref{a posteriori cov}).
\ENDFOR
\end{algorithmic}
\end{algorithm}

\section{Analysis}
\label{section:analysis}
The optimal relaxed solution for $(\tilde{\mathbf{D}}_t,\tilde{\mathbf{D}}_{t+1})$ in (\ref{horizontwo}) can be solved by any standard optimization tool if it is convex.
So in this section, we are going to analyze the convexity of object function (\ref{horizontwo}).
\subsection{Convexity Composition for Matrix Valued Functions}
Object function (\ref{horizontwo}) is a composite function of $(\tilde{\mathbf{D}}_t,\tilde{\mathbf{D}}_{t+1})$.
Usually, the convexity of composition function is analyzed by the second derivative as in \cite[Sec. 3.2.4]{boyd2004convex}.
But it is hard to calculate the derivative for the matrix valued function, so we propose a new tool to analyze the convexity.

For a symmetric matrix $\mathbf{X}$, $\mathbf{X}\succeq (\preceq)\, 0$ means it is positive (negative) semidefinite. For two positive or negative semidefinite matrices $\mathbf{X}_1,\mathbf{X}_2$, $\mathbf{X}_1\succeq(\preceq)\mathbf{X}_2$ means matrix $\mathbf{X}_1-\mathbf{X}_2$ is positive or negative semidefinite.
Suppose $f:\mathbf{S}^n_{+(-)}\mapsto \mathbf{S}^m_{+(-)}$ is a matrix valued function, where $\mathbf{S}^n_{+(-)}$ denotes the
set of symmetric positive (negative) semidefinite $n\times n$ matrices.
\begin{definition}
Function $f$ is matrix nonincreasing (nondeacreasing) if $f(\mathbf{X}_1) \preceq(\succeq) f(\mathbf{X}_2)$ for $\mathbf{X}_1\succeq\mathbf{X}_2$. 
\end{definition}
\begin{definition}\cite[Sec. 3.6.2]{boyd2004convex}
Function $f$ is matrix convex (concave) with respect to matrix inequality if
\begin{equation}
\label{matrix convexity}
f(\theta \mathbf{X}_1 + (1-\theta)\mathbf{X}_2) \preceq (\succeq) \theta f(\mathbf{X}_1) + (1-\theta)f(\mathbf{X}_2)\nonumber
\end{equation}
for $\mathbf{X}_1,\mathbf{X}_2 \in \mathbf{S}^n_+$ or $\mathbf{X}_1,\mathbf{X}_2 \in \mathbf{S}^n_-$ and any $\theta\in [0,1]$.
\end{definition}
\begin{theorem}[Rule 1]
\label{theorem1}
Let $h_s:\mathbf{S}^m_{+(-)}\mapsto \mathbf{S}^k_{+(-)}$ and $h_m:\mathbf{S}^n_{+(-)}\mapsto\mathbf{S}^m_{+(-)}$, $h = h_s\circ h_m$ is matrix convex if $h_s$ is matrix convex and nonincreasing and $h_m$ is matrix concave.
\end{theorem}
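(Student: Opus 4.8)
The plan is to mimic the classical scalar composition rule (convex and nonincreasing outer map composed with a concave inner map) but to carry every inequality in the L\"owner order $\preceq$ rather than the usual scalar order. Fix two matrices $\mathbf{X}_1,\mathbf{X}_2$ in the relevant cone $\mathbf{S}^n_{+(-)}$ and a scalar $\theta\in[0,1]$, and abbreviate $\mathbf{X}_\theta = \theta\mathbf{X}_1 + (1-\theta)\mathbf{X}_2$. Because $\mathbf{S}^n_{+(-)}$ is a convex cone, $\mathbf{X}_\theta$ again lies in the domain of $h_m$, so the composition $h=h_s\circ h_m$ is well defined at every point in play. The goal is to assemble the chain $h(\mathbf{X}_\theta)\preceq \theta h(\mathbf{X}_1)+(1-\theta)h(\mathbf{X}_2)$ from three applications of the hypotheses, linked by transitivity of $\preceq$.

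First I would invoke concavity of the inner map. Since $h_m$ is matrix concave,
\[
h_m(\mathbf{X}_\theta) \succeq \theta h_m(\mathbf{X}_1) + (1-\theta) h_m(\mathbf{X}_2),
\]
and both sides lie in $\mathbf{S}^m_{+(-)}$, the domain of $h_s$, using once more that this cone is closed under convex combinations. Next I apply the monotonicity of the outer map to this inequality: because $h_s$ is matrix nonincreasing it reverses the order of the two comparable arguments above, yielding
\[
h_s\!\left(h_m(\mathbf{X}_\theta)\right) \preceq h_s\!\left(\theta h_m(\mathbf{X}_1) + (1-\theta) h_m(\mathbf{X}_2)\right).
\]
Finally I use convexity of $h_s$ on its right-hand argument, which is itself a convex combination of $h_m(\mathbf{X}_1)$ and $h_m(\mathbf{X}_2)$, to obtain
\[
h_s\!\left(\theta h_m(\mathbf{X}_1) + (1-\theta) h_m(\mathbf{X}_2)\right) \preceq \theta\, h_s\!\left(h_m(\mathbf{X}_1)\right) + (1-\theta)\, h_s\!\left(h_m(\mathbf{X}_2)\right).
\]
Chaining the last two displays by transitivity of the L\"owner order (the sum of positive semidefinite matrices is positive semidefinite) gives $h(\mathbf{X}_\theta)\preceq \theta h(\mathbf{X}_1)+(1-\theta)h(\mathbf{X}_2)$, which is exactly matrix convexity of $h$.

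The argument is structurally short, so the only place demanding care---and hence the main obstacle---is the bookkeeping around the monotonicity step: I must verify that the two matrices fed into $h_s$ both belong to $\mathbf{S}^m_{+(-)}$ and are genuinely comparable (their difference is semidefinite), so that the nonincreasing property is actually applicable and the order really flips. I would also double-check the direction of every $\preceq$, since a single reversed inequality (for instance confusing nonincreasing with nondecreasing) collapses the whole chain; the nonincreasing hypothesis is precisely what converts the $\succeq$ produced by concavity of $h_m$ into the $\preceq$ needed to meet the convexity inequality coming out of $h_s$.
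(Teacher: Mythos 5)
Your proof is correct and follows exactly the same route as the paper: concavity of $h_m$ combined with the nonincreasing property of $h_s$ gives the first inequality, convexity of $h_s$ gives the second, and transitivity of $\preceq$ closes the chain. The only cosmetic difference is that you split the first step into two explicit displays (concavity, then monotonicity) where the paper merges them into a single inequality.
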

\begin{proof}
We can obtain the following inequalities
\begin{align}
\label{concavity}
&h_s(h_m(\theta \mathbf{X}_1 + (1-\theta )\mathbf{X}_2)) \nonumber\\
\preceq\, & h_s(\theta h_m(\mathbf{X}_1) + (1-\theta)h_m(\mathbf{X}_2))\nonumber\\
\preceq\, &\theta h_s(h_m(\mathbf{X}_1))+(1-\theta)h_s(h_m(\mathbf{X}_2)),
\end{align}
where the first inequality comes from the matrix concavity of $h_m$ and matrix nonincreasing property of $h_s$, and the second inequality comes from the matrix convexity of $h_s$.
Thus, the theorem is proved.
\end{proof}
When $k=1$, $h_s$ will be a scalar valued function and  the '$\preceq$' in the second line of (\ref{concavity}) will become '$\leq$'.
We can also get the other three composition rules using the similar method as follow:

\begin{itemize}
\item Rule 2: $h = h_s\circ h_m$ is matrix convex if $h_s$ is matrix convex and nondecreasing and $h_m$ is matrix convex.
\item Rule 3: $h = h_s\circ h_m$ is matrix concave if $h_s$ is matrix concave and nonincreasing and $h_m$ is matrix convex.
\item Rule 4: $h = h_s\circ h_m$ is matrix concave if $h_s$ is matrix concave and nondecreasing and $h_m$ is matrix concave.
\end{itemize}
These composition rules can be applied multiple times to analyze the convexity or concavity of matrix valued functions.
\subsection{Convexity of Object Function}
By using Theorem \ref{theorem1}, we can get the following lemmas.
\begin{lemma}
\label{lemma2}
$h_1(\mathbf{X}) = \tr(\mathbf{X}^{-1})$ is convex and nonincreasing for $\mathbf{X}\subset\mathbf{S}^n_{++}$. $\tilde{h}_1(\mathbf{X})=-\tr(\mathbf{X}^{-1})$ is also convex and nonincreasing for $\mathbf{X}\subset\mathbf{S}^n_{--}$, where $\mathbf{S}^n_{++(--)}$ denotes a set of symmetric positive (negative) definite matrices.
\end{lemma}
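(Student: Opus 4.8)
Since the codomain of both $h_1$ and $\tilde h_1$ is $\mathbb{R}$ (the case $k=1$ in Theorem~\ref{theorem1}), the notions of matrix convexity and matrix monotonicity from Definition~1 and Definition~2 collapse to their ordinary scalar counterparts: convexity means $h_1(\theta\mathbf{X}_1+(1-\theta)\mathbf{X}_2)\le \theta h_1(\mathbf{X}_1)+(1-\theta)h_1(\mathbf{X}_2)$, and the monotonicity condition becomes a scalar inequality $h_1(\mathbf{X}_1)\le h_1(\mathbf{X}_2)$ whenever $\mathbf{X}_1\succeq\mathbf{X}_2$. The plan is therefore to establish these two scalar facts for $h_1$ on $\mathbf{S}^n_{++}$ first, and then to transfer both to $\tilde h_1$ on $\mathbf{S}^n_{--}$ by the affine change of variable $\mathbf{Y}=-\mathbf{X}$.

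For the convexity of $h_1(\mathbf{X})=\tr(\mathbf{X}^{-1})$, I would restrict to an arbitrary line $\mathbf{X}(s)=\mathbf{Z}+s\mathbf{V}$ with $\mathbf{Z}\succ 0$ and $\mathbf{V}$ symmetric, and show that $s\mapsto h_1(\mathbf{X}(s))$ is convex on the interval where $\mathbf{X}(s)\succ 0$. Writing $\mathbf{X}(s)=\mathbf{Z}^{1/2}(\mathbf{I}+s\mathbf{M})\mathbf{Z}^{1/2}$ with $\mathbf{M}=\mathbf{Z}^{-1/2}\mathbf{V}\mathbf{Z}^{-1/2}$ and diagonalizing $\mathbf{M}=\mathbf{Q}\mathbf{\Lambda}\mathbf{Q}^\text{T}$, the trace reduces to $\sum_i a_{ii}/(1+s\lambda_i)$, where the $a_{ii}>0$ are the diagonal entries of $\mathbf{Q}^\text{T}\mathbf{Z}^{-1}\mathbf{Q}$. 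Each summand has second derivative $2a_{ii}\lambda_i^2/(1+s\lambda_i)^3>0$ on the admissible interval, so the restriction is a sum of convex functions and hence convex; since this holds along every line, $h_1$ is convex. Equivalently, one may invoke the matrix convexity of $\mathbf{X}\mapsto\mathbf{X}^{-1}$ on $\mathbf{S}^n_{++}$ and compose with the trace, which is linear and order preserving. I expect this convexity step to be the main obstacle, since it is the only place where a genuine second-order computation is unavoidable.

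The monotonicity of $h_1$ then follows from the order-reversing property of matrix inversion: if $\mathbf{X}_1\succeq\mathbf{X}_2\succ 0$ then $\mathbf{X}_1^{-1}\preceq\mathbf{X}_2^{-1}$, and applying the trace (which preserves $\preceq$) gives $\tr(\mathbf{X}_1^{-1})\le\tr(\mathbf{X}_2^{-1})$, so $h_1$ is nonincreasing.

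Finally, for $\tilde h_1$ on $\mathbf{S}^n_{--}$ the substitution $\mathbf{Y}=-\mathbf{X}\succ 0$ gives $\tilde h_1(\mathbf{X})=-\tr((-\mathbf{Y})^{-1})=\tr(\mathbf{Y}^{-1})=h_1(-\mathbf{X})$. Because $\mathbf{X}\mapsto-\mathbf{X}$ is affine, the composition $h_1(-\mathbf{X})$ inherits the convexity of $h_1$, so $\tilde h_1$ is convex on the negative-definite branch. The monotonicity claim likewise reduces to that of $h_1$ under the same substitution, since $\mathbf{X}_1\succeq\mathbf{X}_2$ is equivalent to $-\mathbf{X}_1\preceq-\mathbf{X}_2$; composing the nonincreasing $h_1$ with the order-reversing map $\mathbf{X}\mapsto-\mathbf{X}$ transfers the monotonicity property to $\tilde h_1$, completing the proof. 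I would flag that the two order reversals (one from inversion, one from the negation $\mathbf{X}\mapsto-\mathbf{X}$) must be tracked carefully here, as this is the most error-prone sign bookkeeping in the argument.
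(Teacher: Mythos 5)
Your treatment of $h_1$ on $\mathbf{S}^n_{++}$ is correct and complete (note that the paper itself states this lemma without any proof, merely asserting it before Theorem~2, so there is no paper argument to compare against): the line-restriction computation reducing $\tr(\mathbf{X}(s)^{-1})$ to $\sum_i a_{ii}/(1+s\lambda_i)$ is right, the alternative route via matrix convexity of inversion plus linearity and order-preservation of the trace is right, and the monotonicity argument via $\mathbf{X}_1\succeq\mathbf{X}_2\succ 0\Rightarrow\mathbf{X}_1^{-1}\preceq\mathbf{X}_2^{-1}$ is right. The convexity of $\tilde h_1$ on $\mathbf{S}^n_{--}$ via the identity $\tilde h_1(\mathbf{X})=h_1(-\mathbf{X})$ and pre-composition with the linear map $\mathbf{X}\mapsto-\mathbf{X}$ is also sound.

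The genuine gap is the final monotonicity step, exactly the place you flagged as error-prone. Composing the nonincreasing $h_1$ with the order-reversing map $\mathbf{X}\mapsto-\mathbf{X}$ does not transfer the nonincreasing property: the two order reversals cancel, and the composite is \emph{nondecreasing}. Concretely, if $\mathbf{X}_1\succeq\mathbf{X}_2$ with both in $\mathbf{S}^n_{--}$, then $-\mathbf{X}_1\preceq-\mathbf{X}_2$ in $\mathbf{S}^n_{++}$, hence $(-\mathbf{X}_1)^{-1}\succeq(-\mathbf{X}_2)^{-1}$ and $\tilde h_1(\mathbf{X}_1)=\tr\left((-\mathbf{X}_1)^{-1}\right)\geq\tr\left((-\mathbf{X}_2)^{-1}\right)=\tilde h_1(\mathbf{X}_2)$, the opposite inequality to the one you need. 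A counterexample to the claim you are trying to prove: $\mathbf{X}_1=-\mathbf{I}\succeq\mathbf{X}_2=-2\mathbf{I}$, yet $\tilde h_1(\mathbf{X}_1)=n>n/2=\tilde h_1(\mathbf{X}_2)$; in the scalar case $\tilde h_1(x)=-1/x$ is strictly increasing on $(-\infty,0)$. So this is not a fixable slip in your write-up: the lemma's second monotonicity assertion is false as stated, and the correct statement is that $\tilde h_1$ is convex and nondecreasing on $\mathbf{S}^n_{--}$. This correction matters downstream, since which composition rule of Theorem~1 can legitimately be applied to the second term of the objective in Appendix~A depends on this monotonicity direction.
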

\begin{lemma}
\label{lemma3}
$h_2(\mathbf{X}) = -\mathbf{A}^\text{T}\mathbf{X}^{-1}\mathbf{A}-\mathbf{B}$ is concave and nondecreasing for $\mathbf{X,A,B}\in\mathbf{S}^n_{++}$. $\tilde{h}_2(\mathbf{X})=\mathbf{A}^\text{T}\mathbf{X}^{-1}\mathbf{A}+\mathbf{B}$ is also concave and nondecreasing for $\mathbf{X,A,B}\in\mathbf{S}^n_{--}$.
\end{lemma}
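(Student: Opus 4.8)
The plan is to reduce both assertions to two primitive facts about the matrix inverse that sit naturally inside the composition machinery of Theorem~\ref{theorem1}: that $\mathbf{X}\mapsto\mathbf{X}^{-1}$ is matrix convex on $\mathbf{S}^n_{++}$ and matrix concave on $\mathbf{S}^n_{--}$, and that the inverse is order-reversing on each cone, i.e. $\mathbf{X}_1\succeq\mathbf{X}_2$ implies $\mathbf{X}_1^{-1}\preceq\mathbf{X}_2^{-1}$ (on scalars this is just monotonicity of $1/x$ away from the origin). The other building block I need is that the congruence map $h_s(\mathbf{M})=\mathbf{A}^\text{T}\mathbf{M}\mathbf{A}$ is linear and matrix nondecreasing, since $\mathbf{M}_1\succeq\mathbf{M}_2$ gives $\mathbf{A}^\text{T}(\mathbf{M}_1-\mathbf{M}_2)\mathbf{A}\succeq 0$. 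I would establish the convexity/concavity facts, if not taken as known, by simultaneous diagonalization of the pair $\mathbf{X}_1,\mathbf{X}_2$ or by the Schur-complement representation of $\mathbf{A}^\text{T}\mathbf{X}^{-1}\mathbf{A}$.

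For the positive-definite statement I would split the two properties. Concavity of $h_2$: writing $\mathbf{A}^\text{T}\mathbf{X}^{-1}\mathbf{A}=h_s\circ(\mathbf{X}^{-1})$ with $h_s$ linear (hence convex) and nondecreasing and $\mathbf{X}^{-1}$ matrix convex, Rule~2 gives that $\mathbf{A}^\text{T}\mathbf{X}^{-1}\mathbf{A}$ is matrix convex, so $h_2=-\mathbf{A}^\text{T}\mathbf{X}^{-1}\mathbf{A}-\mathbf{B}$ is matrix concave (negation flips convexity and the constant $\mathbf{B}$ is harmless). Monotonicity of $h_2$: chaining the antitone inverse with the order-preservation of congruence yields $\mathbf{X}_1\succeq\mathbf{X}_2\Rightarrow\mathbf{A}^\text{T}\mathbf{X}_1^{-1}\mathbf{A}\preceq\mathbf{A}^\text{T}\mathbf{X}_2^{-1}\mathbf{A}$, hence $h_2(\mathbf{X}_1)\succeq h_2(\mathbf{X}_2)$, which is matrix nondecreasing.

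For the negative-definite statement I would handle concavity the same way, now invoking that $\mathbf{X}^{-1}$ is matrix concave on $\mathbf{S}^n_{--}$ and applying Rule~4 (concave and nondecreasing $h_s$ composed with concave $h_m$) to conclude that $\mathbf{A}^\text{T}\mathbf{X}^{-1}\mathbf{A}$, and therefore $\tilde{h}_2=\mathbf{A}^\text{T}\mathbf{X}^{-1}\mathbf{A}+\mathbf{B}$, is matrix concave; equivalently one can use the affine substitution $\mathbf{Y}=-\mathbf{X}\in\mathbf{S}^n_{++}$, under which $\mathbf{X}^{-1}=-\mathbf{Y}^{-1}$ and concavity transfers from the positive-definite case through the order-preserving change of variable.

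The monotonicity of $\tilde{h}_2$ is the step I expect to be the main obstacle, precisely because passing to the negative cone reverses every Loewner relation and the bookkeeping is easy to get wrong. Tracking it explicitly, $\mathbf{X}_1\succeq\mathbf{X}_2$ on $\mathbf{S}^n_{--}$ means $\mathbf{Y}_1\preceq\mathbf{Y}_2$ for $\mathbf{Y}_i=-\mathbf{X}_i\succ 0$, so the antitone inverse gives $\mathbf{Y}_1^{-1}\succeq\mathbf{Y}_2^{-1}$ and, after congruence and the identity $\mathbf{X}^{-1}=-\mathbf{Y}^{-1}$, one obtains $\mathbf{A}^\text{T}\mathbf{X}_1^{-1}\mathbf{A}\preceq\mathbf{A}^\text{T}\mathbf{X}_2^{-1}\mathbf{A}$. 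My preliminary derivation therefore points to $\tilde{h}_2(\mathbf{X}_1)\preceq\tilde{h}_2(\mathbf{X}_2)$, i.e. matrix nonincreasing rather than nondecreasing; before feeding the monotonicity hypothesis into the composition rules for the objective in (\ref{horizontwo}), I would carefully reconcile this direction with the statement, checking which monotonicity is actually demanded of $\tilde{h}_2$ at the point where it enters as an outer function, and adjust the bookkeeping (or the stated direction) so that the hypotheses of Rule~1--4 are matched exactly.
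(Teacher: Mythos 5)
Your derivations are correct, and you should trust them over the statement itself. Note first that the paper never proves this lemma: it is asserted immediately after Theorem~\ref{theorem1} ("by using Theorem 1, we can get the following lemmas") with no argument, so there is no proof of record to compare against. Your treatment of the first claim (matrix convexity of $\mathbf{X}\mapsto\mathbf{X}^{-1}$ on $\mathbf{S}^n_{++}$, order-preservation of the congruence $\mathbf{M}\mapsto\mathbf{A}^\text{T}\mathbf{M}\mathbf{A}$, order-reversal of the inverse) correctly gives that $h_2$ is matrix concave and nondecreasing, and your Rule~4 / change-of-variables argument for the concavity of $\tilde{h}_2$ on $\mathbf{S}^n_{--}$ is also sound. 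As for the monotonicity of $\tilde{h}_2$, your bookkeeping is right and the lemma's stated direction is wrong: the inverse is order-reversing on the negative definite cone exactly as on the positive definite one, so $\mathbf{X}_1\succeq\mathbf{X}_2$ in $\mathbf{S}^n_{--}$ forces $\tilde{h}_2(\mathbf{X}_1)\preceq\tilde{h}_2(\mathbf{X}_2)$, i.e. $\tilde{h}_2$ is matrix \emph{nonincreasing}. The scalar case $n=1$ already settles this under the paper's own Definition~1: for $x<0$ and $a\neq 0$ the function $a^2/x+b$ has derivative $-a^2/x^2<0$; concretely $\tilde{h}_2(-1)=b-a^2< b-a^2/2=\tilde{h}_2(-2)$ although $-1\geq -2$.

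Consequently, do not follow your closing suggestion to "adjust the bookkeeping so that the hypotheses of Rules 1--4 are matched": they cannot be matched, because the property the paper demands of $\tilde{h}_2$ is false, and the error is not cosmetic. In Appendix~A the paper invokes precisely this "nondecreasing" claim, together with Rule~4, to conclude that $\mathbf{Z}_1^{-1}$ from (\ref{Z1}) is a concave function of $\tilde{\mathbf{D}}_t$; with the corrected direction (outer map $\mathbf{Z}\mapsto\mathbf{Z}^{-1}$ concave but nonincreasing on $\mathbf{S}^n_{--}$, inner map $\mathbf{Z}_1$ concave) none of the four composition rules applies. In fact the claimed conclusion fails: in the scalar case, writing $u=(\mathbf{P}_t^-)^{-1}+\sigma_{\mathbf{w}}^{-2}d$, $\tilde{\mathbf{H}}_t=h$, $\mathbf{\Sigma}_{\mathbf{v}}=\sigma$, one gets $\mathbf{Z}_1^{-1}=-u/(h^2+\sigma u)$, whose second derivative in $u$ equals $2h^2\sigma/(h^2+\sigma u)^3>0$, so $\mathbf{Z}_1^{-1}$ is strictly convex, not concave. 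So your proof, carried to its honest conclusion, shows that the second half of the lemma must be restated with "nonincreasing," and that this correction leaves a genuine gap in the paper's proof of Theorem~\ref{theorem2}: the convexity of the second term of (\ref{horizontwo}), even if true, requires a different argument than the one given.
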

\begin{theorem}
\label{theorem2}
The object function in (\ref{horizontwo}) is a convex function of the relaxed $(\tilde{\mathbf{D}}_t,\tilde{\mathbf{D}}_{t+1})$.
\end{theorem}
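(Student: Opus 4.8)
The plan is to exploit the additive structure of the objective. I would write the objective in (\ref{horizontwo}) as $T_1(\tilde{\mathbf{D}}_t) + T_2(\tilde{\mathbf{D}}_t,\tilde{\mathbf{D}}_{t+1})$, where $T_1 = \tr((\mathbf{P}_t^-)^{-1} + \sigma_{\mathbf{w}}^{-2}\mathbf{V}^\text{T}\tilde{\mathbf{D}}_t\mathbf{V})^{-1}$ and $T_2 = \gamma\,\tr(f_t((\mathbf{P}_t^-)^{-1},\tilde{\mathbf{D}}_t) + \sigma_{\mathbf{w}}^{-2}\mathbf{V}^\text{T}\tilde{\mathbf{D}}_{t+1}\mathbf{V})^{-1}$. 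The feasible set is cut out by the box constraints and the single linear budget equality, hence convex, so it suffices to prove that $T_1+T_2$ is jointly convex in $(\tilde{\mathbf{D}}_t,\tilde{\mathbf{D}}_{t+1})$; being a sum, it is enough to treat the two terms separately. The first term is immediate: the inner map $\tilde{\mathbf{D}}_t \mapsto (\mathbf{P}_t^-)^{-1}+\sigma_{\mathbf{w}}^{-2}\mathbf{V}^\text{T}\tilde{\mathbf{D}}_t\mathbf{V}$ is affine and therefore matrix concave, while by Lemma \ref{lemma2} the outer map $h_1(\mathbf{X})=\tr(\mathbf{X}^{-1})$ is convex and nonincreasing, so Rule 1 (with the scalar-valued outer function, $k=1$) shows $T_1$ is convex.

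For $T_2$ the same idea reduces everything to a single fact. Writing $\mathbf{M} = f_t((\mathbf{P}_t^-)^{-1},\tilde{\mathbf{D}}_t)+\sigma_{\mathbf{w}}^{-2}\mathbf{V}^\text{T}\tilde{\mathbf{D}}_{t+1}\mathbf{V}$, the term $\sigma_{\mathbf{w}}^{-2}\mathbf{V}^\text{T}\tilde{\mathbf{D}}_{t+1}\mathbf{V}$ is affine (hence matrix concave) in $\tilde{\mathbf{D}}_{t+1}$, so if $f_t$ is matrix concave in $\tilde{\mathbf{D}}_t$ then $\mathbf{M}$ is jointly matrix concave in $(\tilde{\mathbf{D}}_t,\tilde{\mathbf{D}}_{t+1})$. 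Applying Lemma \ref{lemma2} and Rule 1 once more to $\gamma\,\tr(\mathbf{M}^{-1})$ then gives joint convexity of $T_2$. Thus the whole theorem rests on showing that $f_t$ is matrix concave in $\tilde{\mathbf{D}}_t$.

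This last step is the \emph{main obstacle}, and a direct application of the composition rules fails. Substituting the state transition, $f_t = [\tilde{\mathbf{H}}_{t+1}\mathbf{G}^{-1}\tilde{\mathbf{H}}_{t+1}+\mathbf{\Sigma}_{\mathbf{v}}]^{-1}$ with $\mathbf{G}=(\mathbf{P}_t^-)^{-1}+\sigma_{\mathbf{w}}^{-2}\mathbf{V}^\text{T}\tilde{\mathbf{D}}_t\mathbf{V}$ affine. The variable $\tilde{\mathbf{D}}_t$ thus sits behind two nested inversions, and chaining the rules through ``form $\mathbf{G}^{-1}$, then invert again'' is inconclusive, because the two inversions contribute opposite curvatures (exactly as the scalar map $u\mapsto(h^{2}/u+\sigma_{\mathbf{v}}^{2})^{-1}$ shows, whose increasingness and concavity only emerge after simplification). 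The fix I would use is to first rewrite $f_t$ with the matrix inversion lemma so that $\tilde{\mathbf{D}}_t$ enters affinely inside a single inverse:
\[
f_t = \mathbf{\Sigma}_{\mathbf{v}}^{-1} - \mathbf{\Sigma}_{\mathbf{v}}^{-1}\tilde{\mathbf{H}}_{t+1}\,\mathbf{S}^{-1}\,\tilde{\mathbf{H}}_{t+1}\mathbf{\Sigma}_{\mathbf{v}}^{-1},\qquad \mathbf{S} = \mathbf{G} + \tilde{\mathbf{H}}_{t+1}\mathbf{\Sigma}_{\mathbf{v}}^{-1}\tilde{\mathbf{H}}_{t+1},
\]
where $\mathbf{S}$ is affine, positive definite, and matrix nondecreasing in $\tilde{\mathbf{D}}_t$ (this identity needs only $\mathbf{\Sigma}_{\mathbf{v}},\mathbf{G}\succ0$, not invertibility of $\tilde{\mathbf{H}}_{t+1}$, which matters for singular evolution operators). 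In this form $f_t$ is exactly of the type $-\mathbf{A}^\text{T}\mathbf{S}^{-1}\mathbf{A}+\text{const}$ covered by Lemma \ref{lemma3} (with $\mathbf{A}=\tilde{\mathbf{H}}_{t+1}\mathbf{\Sigma}_{\mathbf{v}}^{-1}$; the additive constant $\mathbf{\Sigma}_{\mathbf{v}}^{-1}$ affects neither curvature nor monotonicity), so $f_t$ is matrix concave and matrix nondecreasing in $\mathbf{S}$. Composing this concave, nondecreasing outer map with the affine (hence concave) map $\tilde{\mathbf{D}}_t\mapsto\mathbf{S}$ through Rule 4 yields that $f_t$ is matrix concave in $\tilde{\mathbf{D}}_t$, which closes the chain.

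I expect the algebraic rearrangement to carry the real content: once the matrix inversion lemma has removed the nested inverse, the four composition rules together with Lemmas \ref{lemma2} and \ref{lemma3} apply mechanically. The only secondary point to check is that all intermediate matrices remain in the cones demanded by the lemmas, in particular $\mathbf{G},\mathbf{S}\in\mathbf{S}^n_{++}$ throughout the relaxed feasible region, which holds since $(\mathbf{P}_t^-)^{-1}\succ0$ and the sampling terms $\sigma_{\mathbf{w}}^{-2}\mathbf{V}^\text{T}\tilde{\mathbf{D}}\mathbf{V}$ are positive semidefinite for $\tilde{\mathbf{D}}\in\tilde{\mathcal{A}}$.
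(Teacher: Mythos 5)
Your proof is correct, but it is not the paper's proof: the two coincide on the first term (affine inner map, Lemma~\ref{lemma2}, Rule~1) and then part ways at exactly the step you single out as the main obstacle. The paper never uses the matrix inversion lemma. Instead it negates everything so as to work in the negative-definite cone: via (\ref{equ1}) the second term becomes $-\gamma\tr(\mathbf{Z}_2^{-1})$ with $\mathbf{Z}_1$, $\mathbf{Z}_2$ as in (\ref{Z1}), (\ref{Z2}); $\mathbf{Z}_1$ is shown matrix concave in $\tilde{\mathbf{D}}_t$ (the $\mathbf{S}^n_{++}$ half of Lemma~\ref{lemma3} plus Rule~4, which is sound); and then $\mathbf{Z}_1^{-1}$, hence $\mathbf{Z}_2$, is claimed concave by applying the $\mathbf{S}^n_{--}$ half of Lemma~\ref{lemma3} and Rule~4 again, i.e.\ inequality (\ref{equ2}). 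That step fails for precisely the reason you anticipated: inversion is order-reversing on the negative-definite cone just as on the positive-definite one ($1/x$ is decreasing for $x<0$), so $\tilde{h}_2$ is nonincreasing rather than nondecreasing and the hypotheses of Rule~4 are not met; in fact the claimed conclusion is false, since in one dimension (evolution coefficient $h$, $u=p+\sigma_{\mathbf{w}}^{-2}d$) one has $\mathbf{Z}_1^{-1}=-u/(h^2+\sigma_{\mathbf{v}}^2u)=-\sigma_{\mathbf{v}}^{-2}+h^2\sigma_{\mathbf{v}}^{-2}(h^2+\sigma_{\mathbf{v}}^2u)^{-1}$, which is convex, not concave, in $u$, so (\ref{equ2}) is violated. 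The theorem itself survives because what the final composition actually needs is concavity of $f_t=-\mathbf{Z}_1^{-1}$, and that is exactly what your Woodbury rewrite delivers: once $\tilde{\mathbf{D}}_t$ enters affinely inside a single inverse, the $\mathbf{S}^n_{++}$ half of Lemma~\ref{lemma3} (harmlessly extended to the non-symmetric $\mathbf{A}=\tilde{\mathbf{H}}_{t+1}\mathbf{\Sigma}_{\mathbf{v}}^{-1}$, since congruence by an arbitrary matrix preserves both matrix convexity and the Loewner order) combines with Rule~4 to give concavity of $f_t$, after which Lemma~\ref{lemma2} and Rule~1 yield joint convexity of the second term. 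So your detour through the matrix inversion lemma is not a stylistic variant of the appendix argument; it is the repair that the paper's own proof needs, and the published route cannot be fixed without it (or without some equivalent direct proof of the concavity and monotonicity of the information-filter update $\mathbf{X}\mapsto(\tilde{\mathbf{H}}\mathbf{X}^{-1}\tilde{\mathbf{H}}+\mathbf{\Sigma}_{\mathbf{v}})^{-1}$).
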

\begin{proof}
See Appendix A.
\end{proof}
\section{Numerical Results}
\label{sec:numerical evaluation}
We now numerically evaluate the performance of the proposed work. The experiments compare the proposed work with the following three methods: M1 \cite{shen2014sensor}, M2 \cite{isufi2020observing} and random sampling by normalized MSE (NMSE) 
\begin{align*}
\text{NMSE}(t) = \frac{\Vert\hat{\mathbf{f}}_t^+ - \hat{\mathbf{f}}_t\Vert^2_2}{\Vert \hat{\mathbf{f}}_t\Vert^2_2}.
\end{align*}
M1 only considers the generalized information gain which is only related to observation model but ignores the signal evolution. M2 takes signal evolution into consideration but ignores the long-term performance.

We first simulate the process of a heat source moving in a sensor network.
The sensor network is modelled as a graph with 100 vertices randomly placed in a unit square and the edges exist between vertices of which the distance is no more than 0.6. 
The heat source moves in a given trajectory which is generated by a random walk.
The evolution matrix of GS is given by the a graph translation  \cite{shuman2013emerging} operator according to the trajectory.
For example, if the center vertices of the trajectory for two continuous time steps are vertex $a$ and $b$, the evolution matrices will be $\tilde{H}_t=\text{diag}(\mathbf{V}^\text{T}\bm{\delta}_a)$ and $\tilde{H}_{t+1}=\text{diag}(\mathbf{V}^\text{T}\bm{\delta}_b)$.
The energy of the GS at each time step is normalized to 1.
The evolution and observation noise are $\textit{i.i.d}$ zero-mean Gaussian white noise with $\sigma_{\mathbf{v}}^2 = 10^{-4}$ and $\sigma_{\mathbf{w}}^2 = 10^{-3}$.  
The initialization states of the GS are $\hat{\mathbf{f}}_0^+ = \mathbf{1}_{N\times 1}$ and $\mathbf{P}_0^+ = \mathbf{I}_{N\times N}$.
The discount factor $\gamma$ is set to $\gamma=0.8$.
The average sampling budget $M = 10$ and the largest budget of each time $M_t = 20$.
For the compared algorithms, the sampling budget of each time step is fixed to 10. 
The accumulated tracking error for 1000 time steps is shown in the second line of Table. \ref{table:error}.
The step-by-step tracking performance of the first 100 time steps is shown in Fig. \ref{error}.

We can find that M1 and random sampling almost lose tracking of the GS and the proposed algorithm improves the tracking performance significantly compared to M2 when the signal evolution between two time steps is abrupt.
A visualized demonstration of the abrupt heat source translation from time step 31 to 32 is shown in Fig. \ref{GS} (a) (b), the circled vertices are the sampled vertices among which the center vertex is in the full line circle and the others are in dashed line circles.
It can be seen that since the long-term performance is considered, the proposed algorithm allocates more sampling budget to time step 32 to make the estimation at step 32 more accuracy.


\begin{figure}[!t]
  \centering
  \centerline{\includegraphics[width=9.5cm]{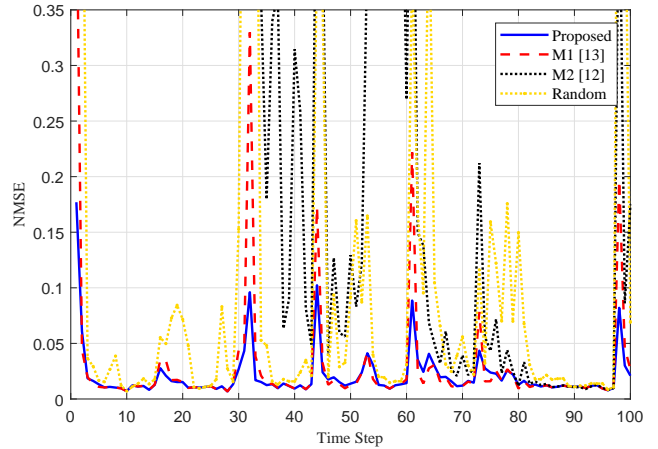}}
\caption{Tracking performance of different algorithms on sensor network.}
\label{error}
\end{figure}

\begin{figure}[!t]
\centering
\begin{minipage}[b]{.4\linewidth}
  \centering
  \centerline{\includegraphics[width=4cm]{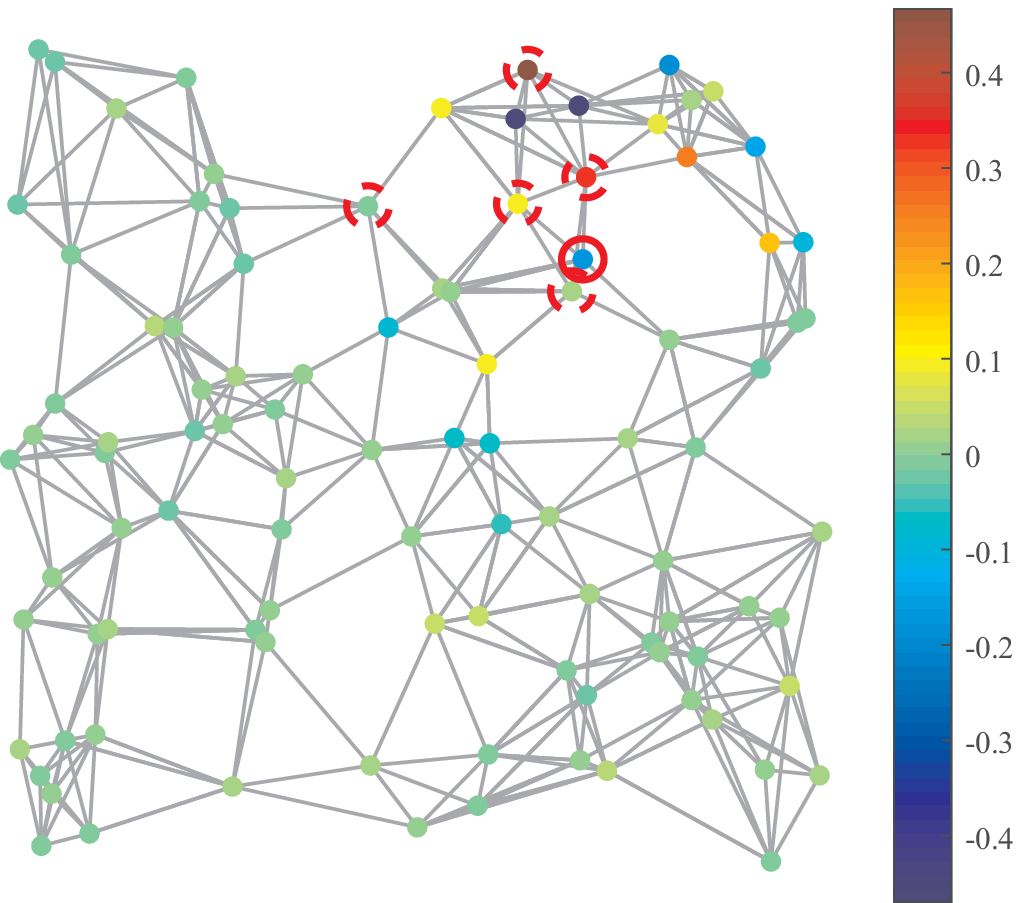}}
  \centerline{(a)}\medskip
\end{minipage}
\quad \quad
\begin{minipage}[b]{0.4\linewidth}
  \centering
  \centerline{\includegraphics[width=4cm]{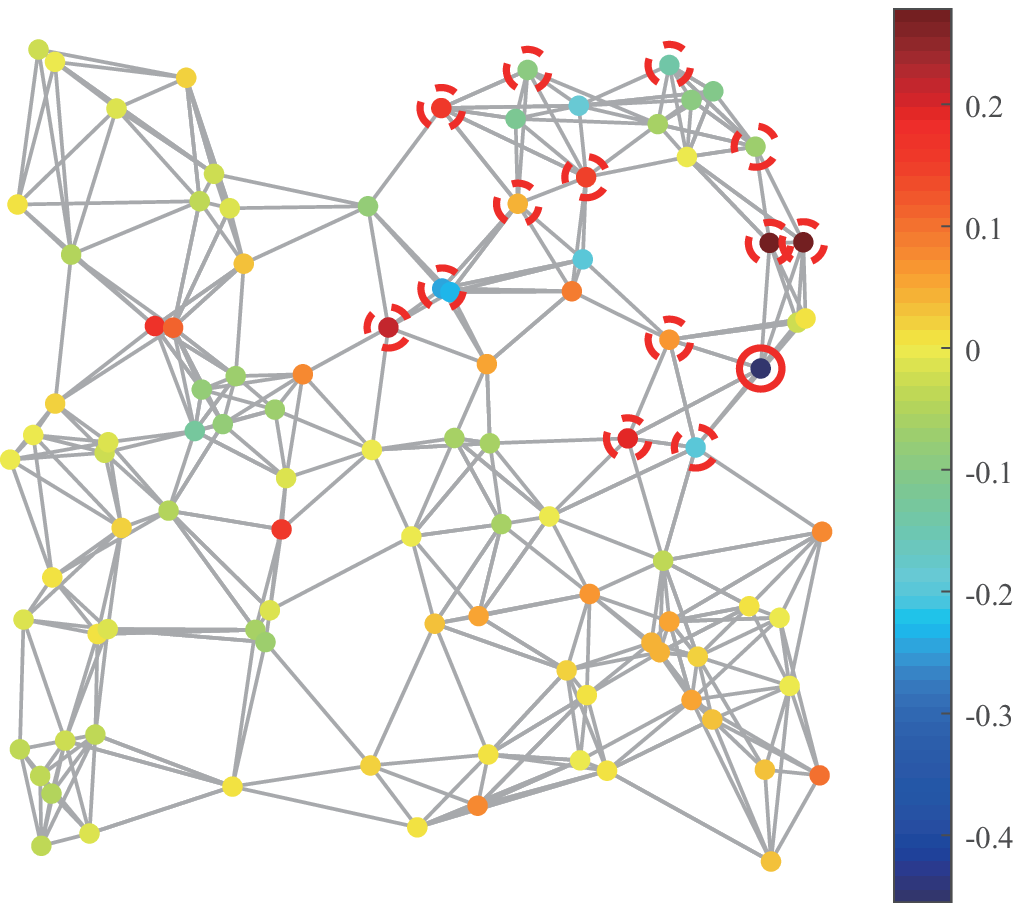}}
  \centerline{(b)}\medskip
\end{minipage}
\caption{GS and sampling sets on sensor network at time step 31 (a) and 32 (b).}
\label{GS}
\end{figure}

Next, we show that our sampling policy also fits for the tracking of GS on time-varying topology, like RES graph.
The opinion evolution in the social network is taken as an example.
A community graph with 7 communities is used to model a social network.
The probability that a edge $e= (i, j)$ in the edge set $\mathcal{E}$ is activated at time $t$ is set to $p_{i,j}=0.5$. 
The edges are activated independently across time.
At each time step $t$, we draw a graph realization $\mathcal{G}_t = (\mathcal{V},\mathcal{E}_t)$ from the underlying graph $\mathcal{G} = (\mathcal{V}, \mathcal{E})$, where the edge set $\mathcal{E}_t \subseteq  \mathcal{E}$ is generated via an i.i.d. Bernoulli process.
An example of the RES community graph is shown in Fig. \ref{RES}.
\begin{figure}[!t]
\centering
\begin{minipage}[b]{.3\linewidth}
  \centering
  \centerline{\includegraphics[width=2.5cm]{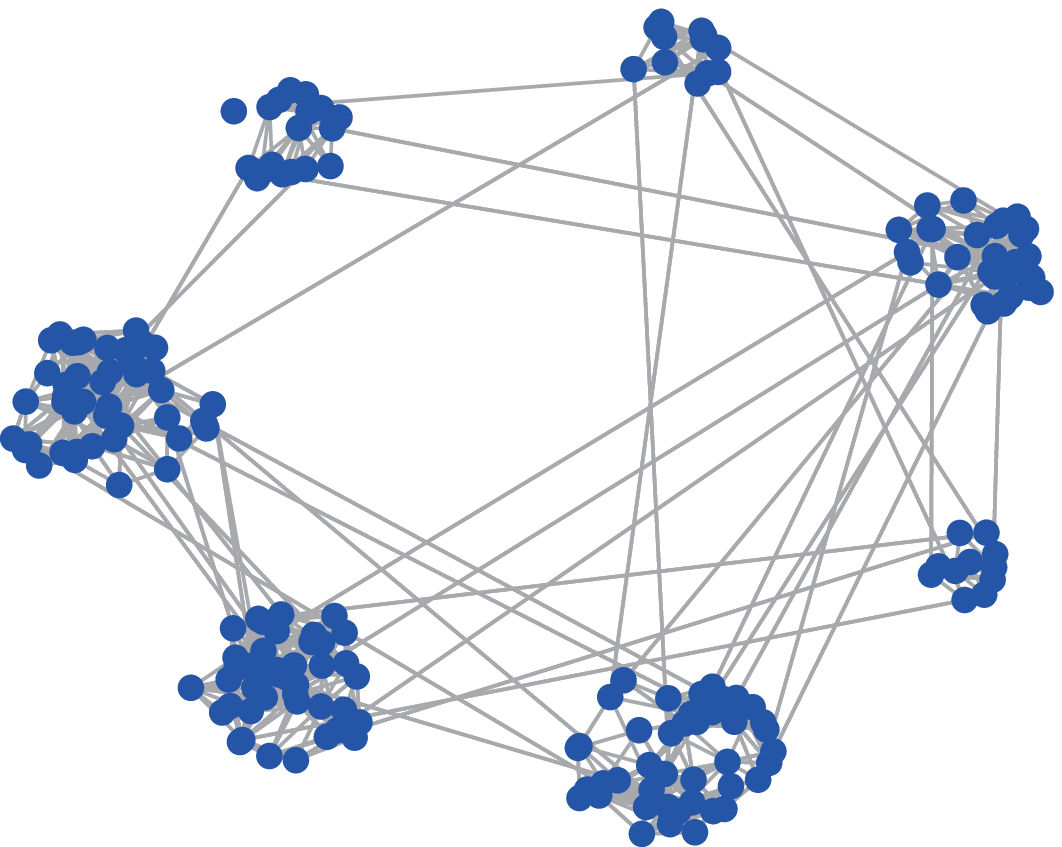}}
  \centerline{(a) $\mathcal{G}$}\medskip
\end{minipage}
\begin{minipage}[b]{0.3\linewidth}
  \centering
  \centerline{\includegraphics[width=2.5cm]{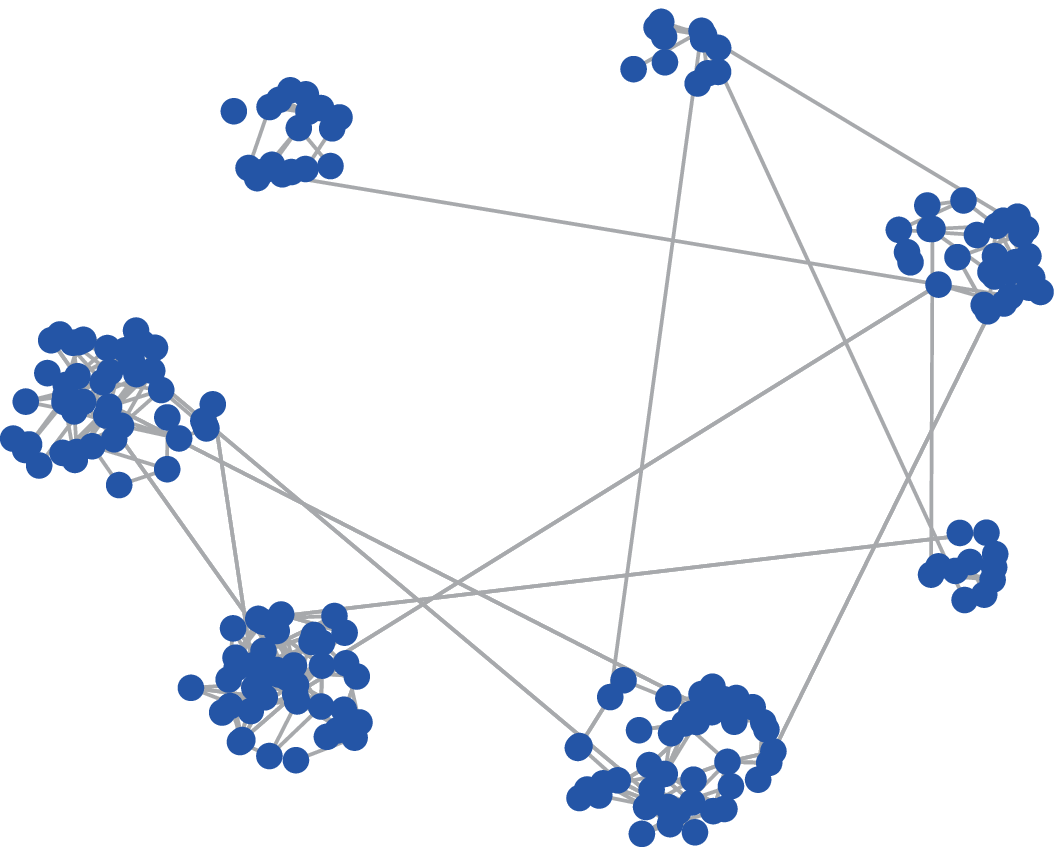}}
  \centerline{(b) $\mathcal{G}_1$}\medskip
\end{minipage}
\begin{minipage}[b]{0.3\linewidth}
  \centering
  \centerline{\includegraphics[width=2.5cm]{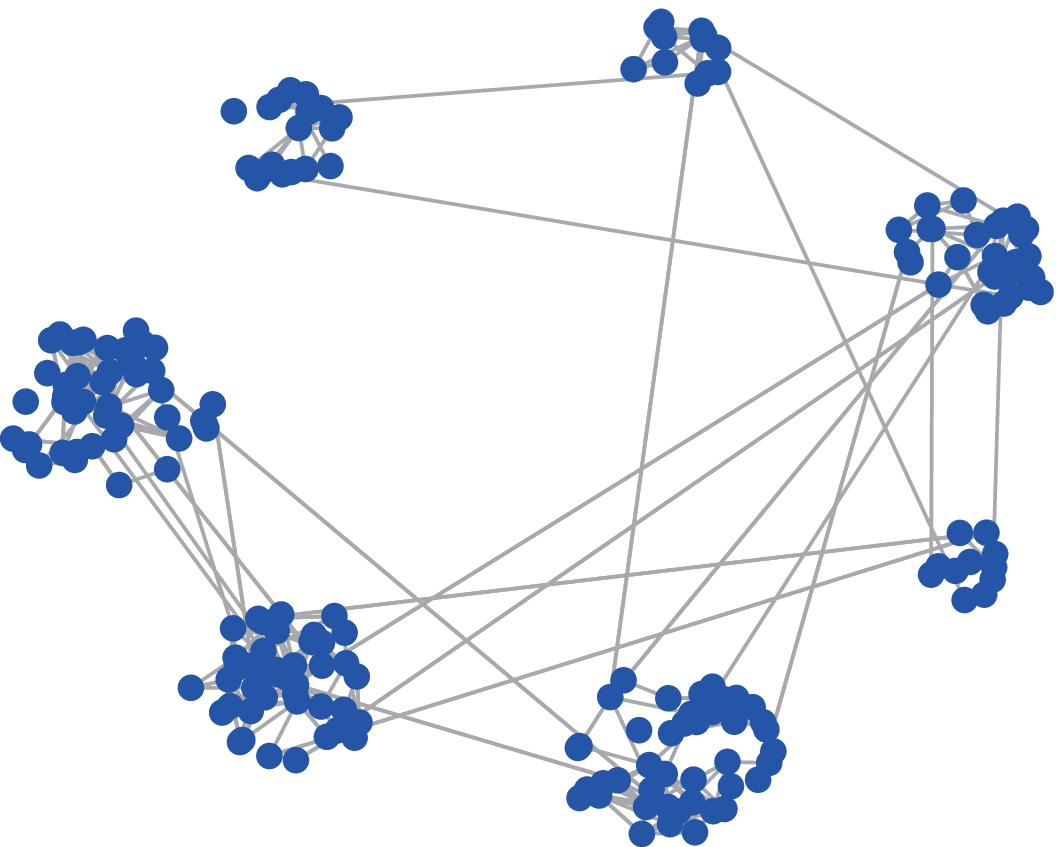}}
  \centerline{(c) $\mathcal{G}_5$}\medskip
\end{minipage}
\caption{Example of RES community graphs.}
\label{RES}
\end{figure}
\begin{table}[!t]
\caption{Accumulated Tracking Error of Different Algorithms.}
\centering
\begin{tabular}{c|c|c|c|c}
\hline
  & Proposed & M1 \cite{shen2014sensor} & M2 \cite{isufi2020observing} & Random \\
\hline
  \tabincell{c}{Sensor network\\(1000 time steps)} & 22.412 & 460.124 & 26.582 & 151.973\\
\hline  
  \tabincell{c}{Social network\\(100 time steps)} & 0.7186 & 0.8004 & 0.7385 & 0.7865\\
\hline
\end{tabular}
\label{table:error}
\end{table}
\begin{figure}[!t]
  \centering
  \centerline{\includegraphics[width=9.5cm]{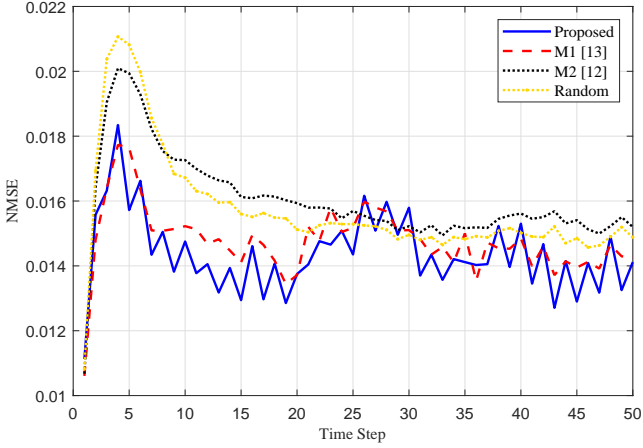}}
\caption{Tracking performance of different algorithms on social network.}
\label{error2}
\end{figure}

The opinion dynamics of individuals in the network follows the Krause-Hegselmann's model \cite{Hegselmann2007Opinion}, which consider the opinion evolution of the individuals as a weighted average of their opinions at a previous time with bounded confidence.
The evolution of the GS on vertex $i$ at time step $t$ follows
\begin{align}
(\mathbf{f}_t)_i = |I(i,\mathbf{f}_{t-1})|^{-1}\sum_{j\in I(i,\mathbf{f}_{t-1})}(\mathbf{f}_{t-1})_j,
\label{KHmodel}
\end{align}
where $I(i,\mathbf{f}_{t}) = \left\{1\leq j \leq N|\, \left|(\mathbf{f}_{t})_i-(\mathbf{f}_{t})_j\right|\leq \epsilon\right\}$ and $|I(i,\mathbf{f}_{t})|$ denotes the cardinality of $I(i,\mathbf{f}_{t})$.
The opinions of individuals are initialized by uniform random numbers in $[0,1]$ with $\mathbf{P}_0^+ = 0.1\mathbf{I}_{N\times N}$ and $\epsilon$ is set to 0.3.
The energy of the GS at each time step is also normalized to 1.
The evolution and observation noise are $\textit{i.i.d.}$ zero-mean Gaussion white noise with $\sigma_{\mathbf{v}}^2 = 10^{-4}$ and $\sigma_{\mathbf{w}}^2 = 10^{-4}$.  
The average sampling budget is set to $M=10$ and the largest budget of each time step is $M_t = 20$, and the discount factor is set to $\gamma=0.8$.
The accumulated tracking error for 100 time steps is shown in the third line of Table. \ref{table:error}, and the step-by-step tracking performance is shown in Fig. \ref{error2}.
In this case, the opinions on the social network become more and more smooth with time passing by according to (\ref{KHmodel}), and therefore a more accurate estimation of GS in the former time step will help to estimate the GS in the later time step more accurately.
By allocating more samples to the former time step, algorithm \ref{algorithm:DP} also performance better in the long-term compared to the other methods with fixed sampling budget in each time step.
\section{Conclusion}
\label{sec:con}
In this paper, a sampling policy with adaptive budget allocation is proposed for tracking a time-varying graph signal with KF. 
By considering the influence of the current sampling policy to the future performance, we formulate the problem as an infinite horizon sequential decision process.
An approximate solution is obtained by truncating the future horizon to one, which improves the tracking performance a lot. 
\vfill
\stripsep-3pt
\begin{strip}
\rule{\textwidth}{0.3mm}
\begin{align}
\label{equ1}
&\gamma \tr\left(\left(\tilde{\mathbf{H}}_t\left((\mathbf{P}_t^-)^{-1} +  \sigma_{\mathbf{w}}^{-2}\mathbf{V}^\text{T}\tilde{\mathbf{D}}_t\mathbf{V}\right)^{-1}\tilde{\mathbf{H}}_t + \mathbf{\Sigma}_{\mathbf{v}}\right)^{-1} + \sigma_{\mathbf{w}}^{-2}\mathbf{V}^\text{T}\tilde{\mathbf{D}}_{t+1}\mathbf{V}\right)^{-1}\nonumber\\
=&-\gamma \tr\left(\left(-\tilde{\mathbf{H}}_t\left((\mathbf{P}_t^-)^{-1} +  \sigma_{\mathbf{w}}^{-2}\mathbf{V}^\text{T}\tilde{\mathbf{D}}_t\mathbf{V}\right)^{-1}\tilde{\mathbf{H}}_t - \mathbf{\Sigma}_{\mathbf{v}}\right)^{-1} - \sigma_{\mathbf{w}}^{-2}\mathbf{V}^\text{T}\tilde{\mathbf{D}}_{t+1}\mathbf{V}\right)^{-1}.
\end{align}
\begin{align}
&\left(-\tilde{\mathbf{H}}_t\left((\mathbf{P}_t^-)^{-1} +  \sigma_{\mathbf{w}}^{-2}\mathbf{V}^\text{T}(\theta\mathbf{X}_1+(1-\theta)\mathbf{Y_1})\mathbf{V}\right)^{-1}\tilde{\mathbf{H}}_t - \mathbf{\Sigma}_{\mathbf{v}}\right)^{-1} - \sigma_{\mathbf{w}}^{-2}\mathbf{V}^\text{T}(\theta\mathbf{X}_2+(1-\theta)\mathbf{Y_2})\mathbf{V}\nonumber\\
\succeq\, & \theta\left(-\tilde{\mathbf{H}}_t\left((\mathbf{P}_t^-)^{-1} +  \sigma_{\mathbf{w}}^{-2}\mathbf{V}^\text{T}\mathbf{X}_1\mathbf{V}\right)^{-1}\tilde{\mathbf{H}}_t - \mathbf{\Sigma}_{\mathbf{v}}\right)^{-1} + (1-\theta)\left(-\tilde{\mathbf{H}}_t\left((\mathbf{P}_t^-)^{-1} +  \sigma_{\mathbf{w}}^{-2}\mathbf{V}^\text{T}\mathbf{Y}_1\mathbf{V}\right)^{-1}\tilde{\mathbf{H}}_t - \mathbf{\Sigma}_{\mathbf{v}}\right)^{-1}\nonumber\\
&  -\theta \sigma_{\mathbf{w}}^{-2}\mathbf{V}^\text{T}\mathbf{X}_2\mathbf{V}-(1-\theta) \sigma_{\mathbf{w}}^{-2}\mathbf{V}^\text{T}\mathbf{Y}_2\mathbf{V}.
\label{equ2}
\end{align}
\rule{\textwidth}{0.3mm}
\vspace{0.1mm}
\end{strip}
\section*{Acknowledgement}
This work was supported by the Shanghai Municipal Natural Science Foundation (19ZR1404700), National Major Scientific Research Instruments and Equipments Development Project of NSFC (11827808), and Fudan-Zhuhai Innovation Institute.
\section*{Appendix A The Proof of Theorem 2}
\label{appendix}
\textit{Theorem} 2: The object function in (\ref{horizontwo}) is a convex function of the relaxed $(\tilde{\mathbf{D}}_t,\tilde{\mathbf{D}}_{t+1})$.\\
\begin{proof}
Obviously, $\sigma_{\mathbf{w}}^{-2}\mathbf{V}^\text{T}\tilde{\mathbf{D}}_t\mathbf{V}$ is a linear function of $\tilde{\mathbf{D}}_t$.
Since $(\mathbf{P}_t^-)^{-1} +\sigma_{\mathbf{w}}^{-2}\mathbf{V}^\text{T}\tilde{\mathbf{D}}_t\mathbf{V}\in \mathbf{S}_+^N$, using Lemma 1 and composition Rule 1 we can prove that the first term of (\ref{horizontwo}) is a convex function of $(\tilde{\mathbf{D}}_t,\tilde{\mathbf{D}}_{t+1})$.

The second term of (\ref{horizontwo}) can be rewritten as (\ref{equ1}). 
For easier reading, let
\begin{equation}
\label{Z1}
\mathbf{Z}_1 = -\tilde{\mathbf{H}}_t\left((\mathbf{P}_t^-)^{-1} +  \sigma_{\mathbf{w}}^{-2}\mathbf{V}^\text{T}\tilde{\mathbf{D}}_t\mathbf{V}\right)^{-1}\tilde{\mathbf{H}}_t - \mathbf{\Sigma}_{\mathbf{v}}.
\end{equation}
Using Lemma 2 and composition Rule 4, we can prove that $\mathbf{Z}_1$ is a concave function of $\tilde{\mathbf{D}}_t$.

It is obvious that $\mathbf{Z}_1$ is symmetric negative semidefinite. 
So we can prove that $\mathbf{Z}_1^{-1}$ is a concave function of $\tilde{\mathbf{D}}_t$ using Lemma 2 and composition Rule 4.

Since $\sigma_{\mathbf{w}}^{-2}\mathbf{V}^\text{T}\tilde{\mathbf{D}}_{t+1}\mathbf{V}$ is a linear function of $\tilde{\mathbf{D}}_{t+1}$, for \\$\mathbf{X}_1,\mathbf{Y}_1,\mathbf{X}_2,\mathbf{Y}_2 \subset \tilde{\mathcal{A}}$ and $\theta\in [0,1]$, we have the (\ref{equ2}).
According to (\ref{Z1}), let
\begin{equation}
\label{Z2}
\mathbf{Z}_2 = \mathbf{Z}_1^{-1} - \sigma_{\mathbf{w}}^{-2}\mathbf{V}^\text{T}\tilde{\mathbf{D}}_{t+1}\mathbf{V}.
\end{equation}
Thus, $\mathbf{Z}_2$ is a concave function of $(\tilde{\mathbf{D}}_t,\tilde{\mathbf{D}}_{t+1})$.

Since $\mathbf{Z}_2\in \mathbf{S}_{--}^N$, we can prove that the second term of (\ref{horizontwo}) is a concave function of $(\tilde{\mathbf{D}}_t,\tilde{\mathbf{D}}_{t+1})$ using Lemma 1 and composition Rule 4.
%
\end{proof}

\bibliographystyle{IEEEtran}
\bibliography{refs}

\begin{thebibliography}{10}
\providecommand{\url}[1]{#1}
\csname url@samestyle\endcsname
\providecommand{\newblock}{\relax}
\providecommand{\bibinfo}[2]{#2}
\providecommand{\BIBentrySTDinterwordspacing}{\spaceskip=0pt\relax}
\providecommand{\BIBentryALTinterwordstretchfactor}{4}
\providecommand{\BIBentryALTinterwordspacing}{\spaceskip=\fontdimen2\font plus
\BIBentryALTinterwordstretchfactor\fontdimen3\font minus
  \fontdimen4\font\relax}
\providecommand{\BIBforeignlanguage}[2]{{%
\expandafter\ifx\csname l@#1\endcsname\relax
\typeout{** WARNING: IEEEtran.bst: No hyphenation pattern has been}%
\typeout{** loaded for the language `#1'. Using the pattern for}%
\typeout{** the default language instead.}%
\else
\language=\csname l@#1\endcsname
\fi
#2}}
\providecommand{\BIBdecl}{\relax}
\BIBdecl

\bibitem{girault2015stationary}
B.~Girault, ``Stationary graph signals using an isometric graph translation,''
  in \emph{EUSIPCO}.\hskip 1em plus 0.5em minus 0.4em\relax IEEE, 2015, pp.
  1516--1520.

\bibitem{perraudin2017stationary}
N.~Perraudin and P.~Vandergheynst, ``Stationary signal processing on graphs.''
  \emph{IEEE Trans. Signal Process.}, vol.~65, no.~13, pp. 3462--3477, 2017.

\bibitem{loukas2016frequency}
A.~Loukas and D.~Foucard, ``Frequency analysis of time-varying graph signals,''
  in \emph{GlobalSIP}.\hskip 1em plus 0.5em minus 0.4em\relax IEEE, 2016, pp.
  346--350.

\bibitem{isufi2016separable}
E.~Isufi, A.~Loukas, A.~Simonetto, and G.~Leus, ``Separable autoregressive
  moving average graph-temporal filters,'' in \emph{EUSIPCO}.\hskip 1em plus
  0.5em minus 0.4em\relax IEEE, 2016, pp. 200--204.

\bibitem{bohannon2019filtering}
A.~W. Bohannon, B.~M. Sadler, and R.~V. Balan, ``A filtering framework for
  time-varying graph signals,'' in \emph{Vertex-Frequency Analysis of Graph
  Signals}.\hskip 1em plus 0.5em minus 0.4em\relax Springer, 2019, pp.
  341--376.

\bibitem{mao2019time}
X.~Mao and Y.~Gu, ``Time-varying graph signals reconstruction,'' in
  \emph{Vertex-Frequency Analysis of Graph Signals}.\hskip 1em plus 0.5em minus
  0.4em\relax Springer, 2019, pp. 293--316.

\bibitem{wei2019optimal}
Z.~Wei, B.~Li, and W.~Guo, ``Optimal sampling in joint time-and graph-domains
  for dynamic complex networks,'' \emph{arXiv preprint arXiv:1901.11405}, 2019.

\bibitem{anis2016efficient}
A.~Anis, A.~Gadde, and A.~Ortega, ``Efficient sampling set selection for
  bandlimited graph signals using graph spectral proxies,'' \emph{IEEE Trans.
  Signal Process.}, vol.~64, no.~14, pp. 3775--3789, 2016.

\bibitem{chen2015discrete}
S.~Chen, R.~Varma, A.~Sandryhaila, and J.~Kova{\v{c}}evi{\'c}, ``Discrete
  signal processing on graphs: Sampling theory,'' \emph{IEEE Trans. Signal
  Process.}, vol.~63, no.~24, pp. 6510--6523, 2015.

\bibitem{xie2017design}
X.~Xie, H.~Feng, J.~Jia, and B.~Hu, ``Design of sampling set for bandlimited
  graph signal estimation,'' \emph{GlobalSIP}, pp. 653--657, Nov 2017.

\bibitem{di2018adaptive}
P.~Di~Lorenzo, P.~Banelli, E.~Isufi, S.~Barbarossa, and G.~Leus, ``Adaptive
  graph signal processing: Algorithms and optimal sampling strategies,''
  \emph{IEEE Trans. Signal Process.}, 2018.

\bibitem{isufi2020observing}
E.~Isufi, P.~Banelli, P.~Di~Lorenzo, and G.~Leus, ``Observing and tracking
  bandlimited graph processes from sampled measurements,'' \emph{Signal
  Process.}, vol. 177, pp. 1--13, 2020.

\bibitem{shen2014sensor}
X.~Shen and P.~K. Varshney, ``Sensor selection based on generalized information
  gain for target tracking in large sensor networks,'' \emph{IEEE Trans. Signal
  Process.}, vol.~62, no.~2, pp. 363--375, 2014.

\bibitem{chepuri2014sparsity}
S.~P. Chepuri and G.~Leus, ``Sparsity-promoting adaptive sensor selection for
  non-linear filtering.'' in \emph{ICASSP}, 2014, pp. 5080--5084.

\bibitem{7931690}
E.~{Isufi}, A.~{Loukas}, A.~{Simonetto}, and G.~{Leus}, ``Filtering random
  graph processes over random time-varying graphs,'' \emph{IEEE Transactions on
  Signal Processing}, vol.~65, no.~16, pp. 4406--4421, 2017.

\bibitem{sandryhaila2014discrete}
A.~Sandryhaila and J.~M. Moura, ``Discrete signal processing on graphs:
  Frequency analysis,'' \emph{IEEE Trans. Signal Process.}, vol.~62, no.~12,
  pp. 3042--3054, 2014.

\bibitem{8047995}
L.~F.~O. {Chamon} and A.~{Ribeiro}, ``Greedy sampling of graph signals,''
  \emph{IEEE Transactions on Signal Processing}, vol.~66, no.~1, pp. 34--47,
  2018.

\bibitem{raj2012network}
A.~Raj, A.~Kuceyeski, and M.~Weiner, ``A network diffusion model of disease
  progression in dementia,'' \emph{Neuron}, vol.~73, no.~6, pp. 1204--1215,
  2012.

\bibitem{wu2014opinionflow}
Y.~Wu, S.~Liu, K.~Yan, M.~Liu, and F.~Wu, ``Opinionflow: Visual analysis of
  opinion diffusion on social media,'' \emph{IEEE Trans. Vis. Comput.
  Graphics}, vol.~20, no.~12, pp. 1763--1772, 2014.

\bibitem{bertsekas2005dynamic}
D.~P. Bertsekas, \emph{Dynamic programming and optimal control}.\hskip 1em plus
  0.5em minus 0.4em\relax Athena scientific Belmont, MA, 2005, vol.~1, no.~3.

\bibitem{Farahmand2016TruncatedAD}
A.~massoud Farahmand, D.~Nikovski, Y.~Igarashi, and H.~Konaka, ``Truncated
  approximate dynamic programming with task-dependent terminal value,'' in
  \emph{AAAI}, 2016.

\bibitem{4289630}
D.~V. {Djonin}, Q.~{Zhao}, and V.~{Krishnamurthy}, ``Optimality and complexity
  of opportunistic spectrum access: A truncated markov decision process
  formulation,'' in \emph{2007 IEEE International Conference on
  Communications}, 2007, pp. 5787--5792.

\bibitem{boyd2004convex}
S.~Boyd and L.~Vandenberghe, \emph{Convex optimization}.\hskip 1em plus 0.5em
  minus 0.4em\relax Cambridge university press, 2004.

\bibitem{shuman2013emerging}
D.~I. Shuman, S.~K. Narang, P.~Frossard, A.~Ortega, and P.~Vandergheynst, ``The
  emerging field of signal processing on graphs: Extending high-dimensional
  data analysis to networks and other irregular domains,'' \emph{IEEE Signal
  Process. Mag.}, vol.~30, no.~3, pp. 83--98, 2013.

\bibitem{Hegselmann2007Opinion}
R.~Hegselmann and U.~Krause, ``Opinion dynamics and bounded confidence models,
  analysis and simulation,'' \emph{Journal of Artificial Societies and Social
  Simulation}, vol.~5, 07 2002.

\end{thebibliography}
\end{document}